\def\R{\mathbb{R}}
\def\Nor{\mathcal{N}}
\def\I{\mathrm{I}}
\def\m{\mathrm{m}}
\def\F{\mathcal{F}}
\def\G{\mathcal{G}}
\def\M{\mathrm{M}}
\def\diag{\mathrm{diag}}
\def\v{\mathrm{v}}
\def\1{\mathds{1}}
\def\for{\mbox{  for }}
\def\det{\mathrm{det}}
\def\tr{\mathrm{tr}}
\begin{document}

\title{Optimal Rescaling and the Mahalanobis Distance}
\subtitle{}


\author{Przemys³aw Spurek    \and
        Jacek Tabor 
}


\institute{P. Spurek \at
              Faculty of Mathematics and Computer Science, Jagiellonian University\\ 
              \L ojasiewicza 6\\
              30-348 Krak\'ow\\ 
             Poland\\
              \email{przemyslaw.spurek@ii.uj.edu.pl  }           
           \and
           J. Tabor \at
              Faculty of Mathematics and Computer Science, Jagiellonian University\\ 
              \L ojasiewicza 6\\
              30-348 Krak\'ow\\ 
             Poland\\
              \email{jacek.tabor@ii.uj.edu.pl  }        
}

\date{Received: date / Accepted: date}

\maketitle

\begin{abstract}
One of the basic problems in data analysis lies in choosing the optimal rescaling (change of coordinate system) to study properties of a given data-set $Y$. The classical Mahalanobis approach has its basis in the classical normalization/rescaling formula $Y \ni y \to \Sigma_Y^{-1/2} \cdot (y-\m_Y)$,
where $\m_Y$ denotes the mean of $Y$ and $\Sigma_Y$ the covariance matrix .

Based on the cross-entropy we generalize this approach
and define the parameter which measures the fit of a 
given affine rescaling of $Y$ compared to the Mahalanobis one.
This allows in particular to find an optimal change of coordinate system which satisfies some additional conditions. In particular we show that in the case when  we put origin of coordinate system in $ \m $ the optimal choice is given by the transformation $Y \ni y \to \Sigma_Y^{-1/2} \cdot (y-\m_Y)$, where
$$
\Sigma=\Sigma_Y(\Sigma_Y-\frac{(\m-\m_Y)(\m-\m_Y)^T}{1+\|\m-\m_Y\|_{\Sigma_Y}^2} )^{-1}\Sigma_Y.
$$
\keywords{Maximum Likelihood Estimation \and MLE \and Cross Entropy \and Optimal Coordinates  \and  Mahalanobis distance}
\end{abstract}

\section{Introduction}


One of the crucial problem in statistics, compression, discrimination analysis
and in general in data mining is how to choose the optimal linear coordinate system and define distance which ``optimally'' underlines the internal structure of the data \cite{Bo-Gr,DM-JR,Ha-Ka,Ja-Sa,Ra-Ma,Re,Ti,Kr-Ka,ma_dis_a_3,ma_dis_a_4,ma_dis_a_5}.
The typical answer is the Mahalanobis distance \cite{ma_dis_a_1,ma_dis_a_2,Ma,DM-JR}, which is strictly connected with PCA \cite{jolliffe2005principal,pca_a_1,pca_a_2}. 
In some other cases we choose just a weighted Euclidean distance.
Our approach uses a method based on MLE (Maximum Likelihood Estimation) \cite{Le-Ca,Bo,mle_a_3}.

To explain it more precisely consider the data-set $Y \subset \R^N$.
If we allow the translation of the origin of coordinate system, 
in one dimensional case we usually apply the normalization:
$s:Y \ni y \to \sigma_Y^{-1}(y-\m_Y)$, 
which in the multivariate case is replaced by
$$
s:Y \ni y \to \Sigma_Y^{-1/2}(y-\m_Y),
$$
where $\m_Y$ denotes the mean and $\Sigma_Y$ the covariance matrix of $Y$. Then we obtain 
that the coordinates are uncorrelated, and the covariance matrix equals to identity. Taking the distance between the transformation of points $x,y$:
$$
\|sx-sy\|^2=(sx-sy)^T (sx-sy)=(x-y)^T \Sigma_{Y}^{-1}(x-y)
$$
we arrive naturally at the definition of the Mahalanobis distance:
$$
\|x-y\|_{\Sigma}^2:=(x-y)^T \Sigma^{-1}(x-y).
$$
If we do not allow the translation of the origin away from zero, which in some cases is natural, we usually only scale/normalize each coordinate by dividing it by its mean. This approach usually has good results if the standard deviation is small in comparison to the mean. In the opposite case, when the mean is small, dividing by it may ``unnaturally'' widen the variable under consideration.

The main results of the paper provide the optimal change of coordinate system which satisfies some additional conditions. In particular we show that in the case when  we put origin of coordinate system in $ \m $ the optimal choice is given by the mapping $Y \ni y \to \Sigma_Y^{-1/2} \cdot (y-\m_Y)$, where
$$
\Sigma=\Sigma_Y(\Sigma_Y-\frac{(\m-\m_Y)(\m-\m_Y)^T}{1+\|\m-\m_Y\|_{\Sigma_Y}^2} )^{-1}\Sigma_Y.
$$

\begin{figure}[hh]
  \begin{center}
\subfigure[]
{\label{s1}
\begin{minipage}{0.3\linewidth}
\includegraphics[width=1.4in]{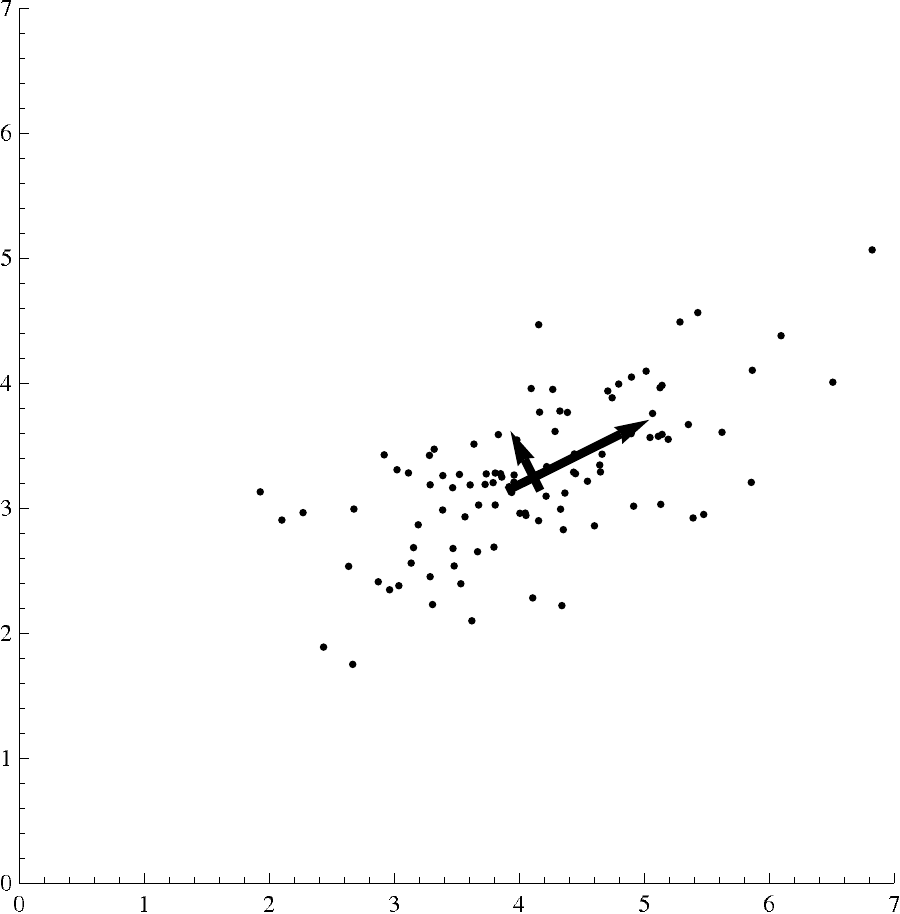}   

\includegraphics[width=1.4in]{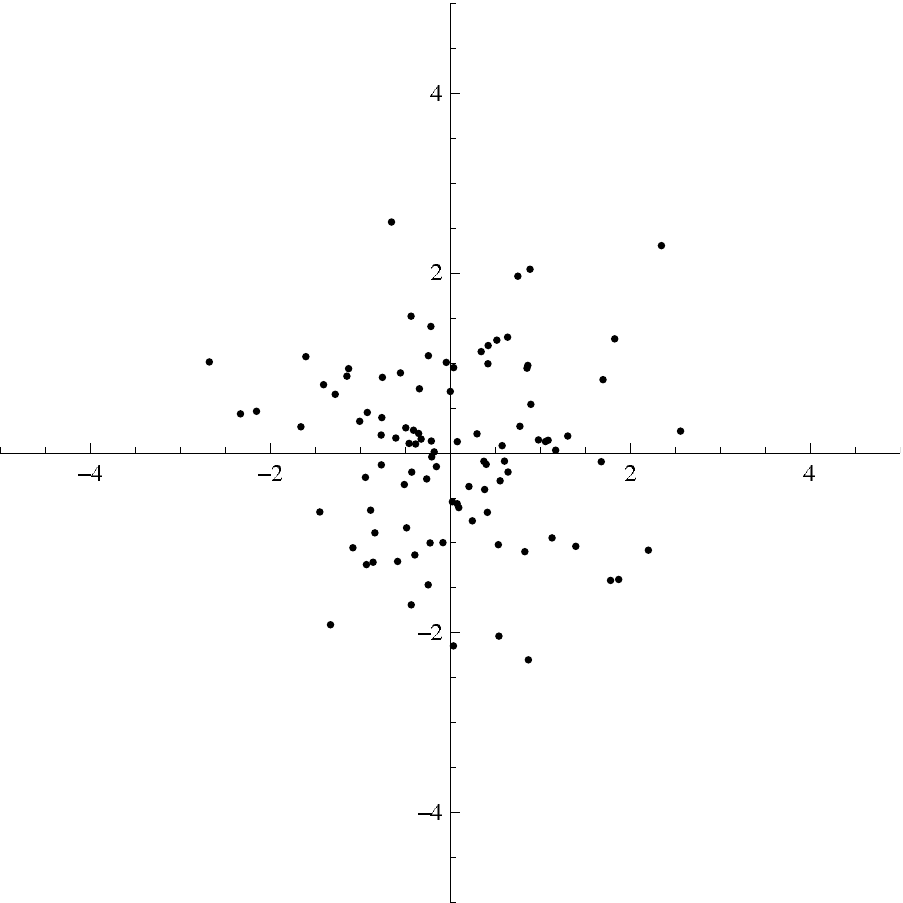} 
\end{minipage}}
\subfigure[] 
{\label{s2}
\begin{minipage}{0.3\linewidth}
\includegraphics[width=1.4in]{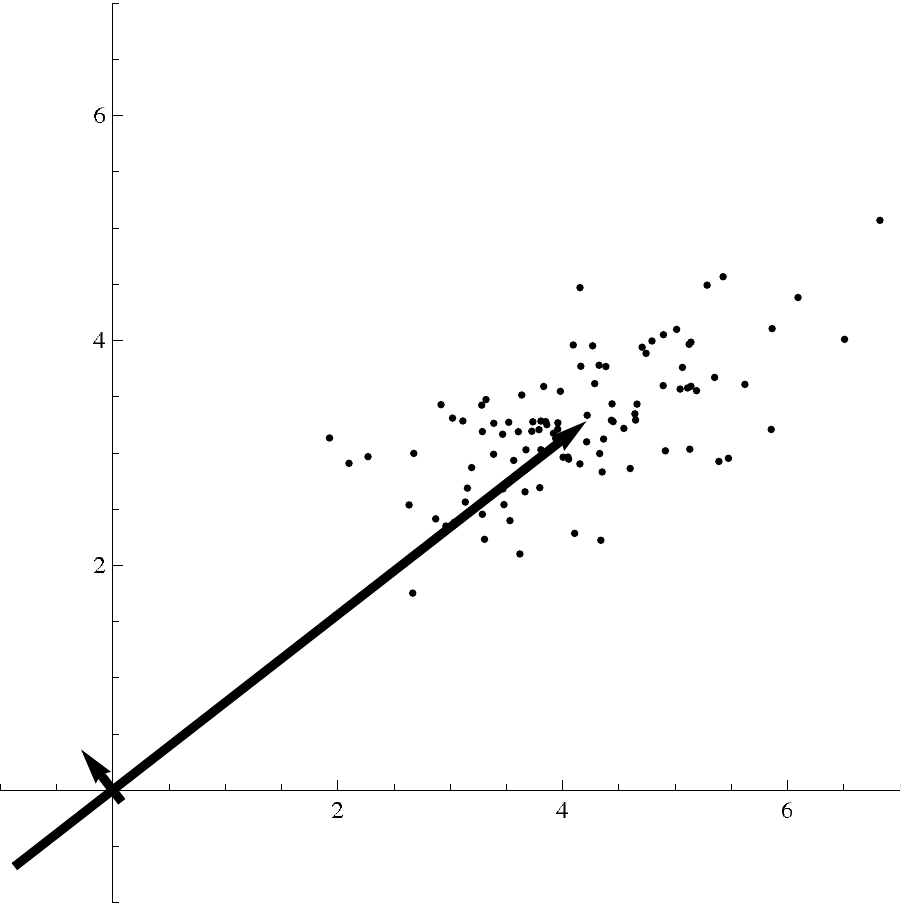}  

\includegraphics[width=1.4in]{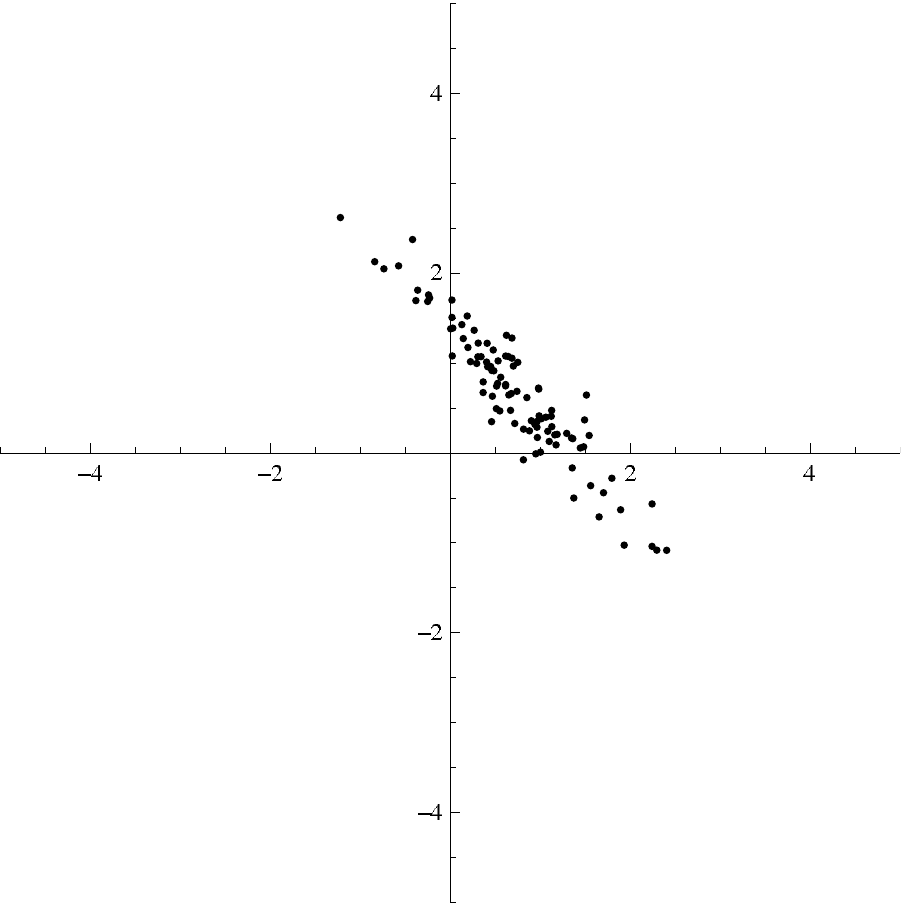} 
\end{minipage}
}
\subfigure[]
{\label{s3}
\begin{minipage}{0.3\linewidth}
\includegraphics[width=1.4in]{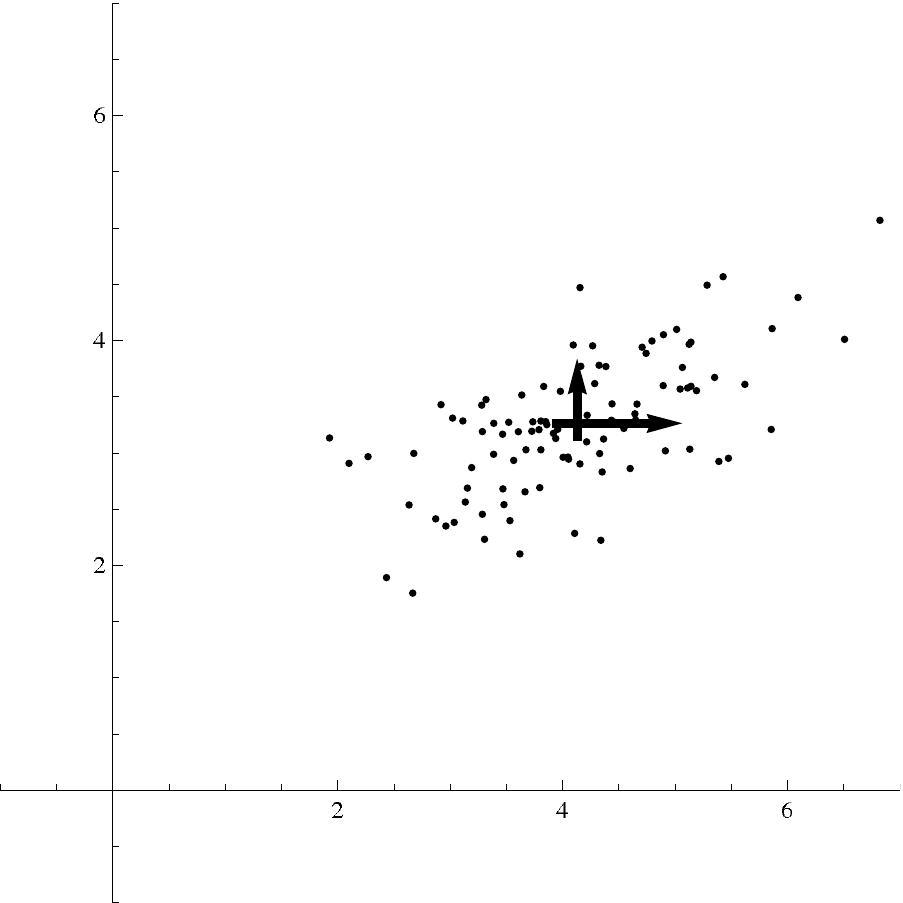} 

\includegraphics[width=1.4in]{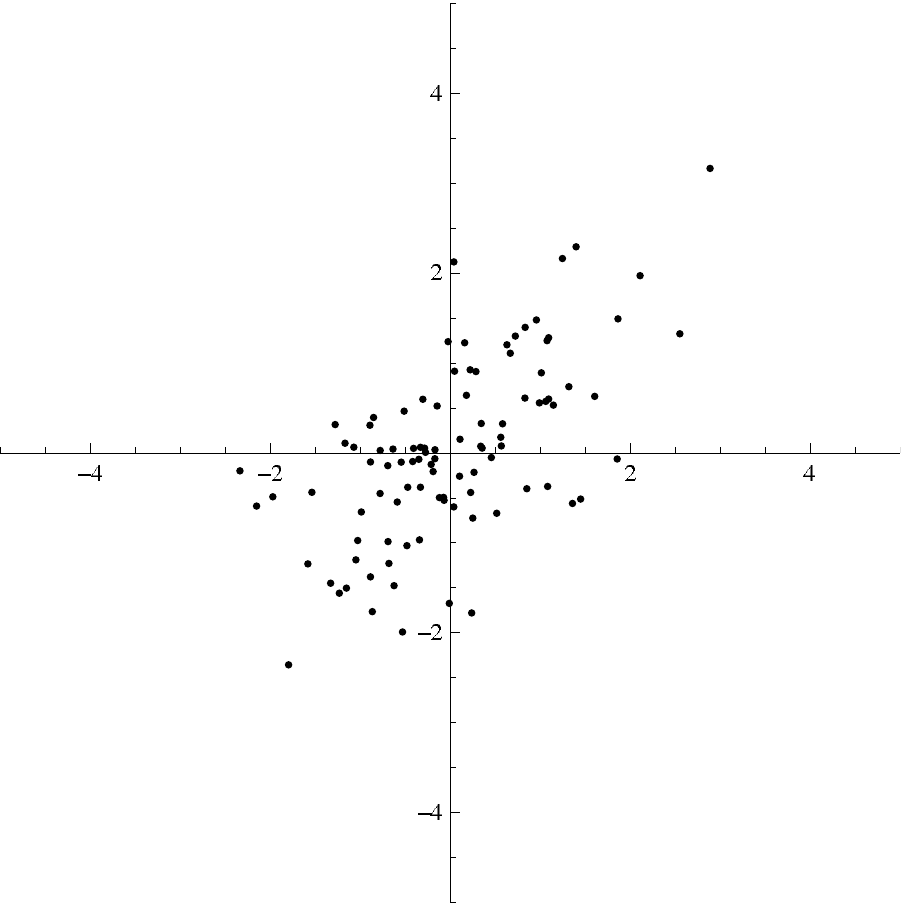} 
\end{minipage}
}
  \end{center}
  \caption{Optimal coordinate systems and the data rewritten in new bases in various situations:  a) optimal  Mahalanobis base; b) optimal base in the center in $[0,0]$ c) optimal base in the case of optimal rescaling of each coordinates; }
  \label{fig:cs_2}
\end{figure}

\begin{example}\label{ex:1}
We illustrate our results on a two-dimensional data drawn from the normal distribution
with mean $\m=[3,4]^T$ and covariance $\Sigma=\left[\begin{array}{cc}1&0.3\\0.3&0.6\end{array}\right]$. In the Figure \ref{s1} we present the base which represents the best (Mahalanobis) rescaling and the data in new coordinates.  In \ref{s2} we present optimal base in the case we do not allow to move the origin of the coordinate system from zero and the data in new base. In the Figure \ref{s3} we present optimal rescaling in the case when we the change of the origin, but we restrict to rescale the coordinates separately and the data in new base. 
\end{example}

To explain more precisely what we mean by optimal coordinates (in the 
given class of coordinate systems)
we need to introduce the function which
measures the ``match'' of a given coordinate system to the data. Suppose that we are given a base $\v=(v_1,\ldots,v_N)$ of $\R^N$ and we put an origin of coordinate system at $\m \in \R^N$. Then
by $\Nor_{[\m,\v]}$ we denote the ``normalized'' Gaussian density with respect to the base $\v$ with center at $\m$. In other words
$$
\Nor_{[\m,\v]}(m+x_1v_1+\ldots+x_N v_N)=
\frac{1}{(2\pi)^{N/2}}e^{-(x_1^2+\ldots+x_N^2)/2}.
$$
Observe that $\Nor_{[\m,\v]}=\Nor_{(\m,(vv^T)^{-1})}$,
where by $\Nor_{(\m,\Sigma)}$ we denote the normal density with mean $\m$
and covariance $\Sigma$. Then we can measure the ``match/fit'' of the coordinate system by the cross-entropy
\begin{equation} \label{e0}
H^{\times}(Y\|\Nor_{[\m,\v]})
\end{equation}
of $Y$ with respect to $\Nor_{[\m,\v]}$.
One can easily observe that the minimum in \eqref{e0} is attained by the Mahalanobis coordinate system, where the mean and covariance is that of $Y$, and thus we obtain a confirmation that the commonly used procedure is reasonable. 
This allows to define the match of data $Y$
with respect to the coordinate system (or more precisely the
corresponding Gaussian density) by the formula
\begin{equation} \label{eq:M_la}
\M(Y\|\Nor_{(\m,\Sigma)}):=H^{\times}(Y\|\Nor_{(\m,\Sigma)})-H^{\times}(Y\|\Nor_Y) \geq 0,
\end{equation}
where $\Nor_Y$ denotes the normal density with mean and covariance of $Y$.
From the well-known formulas \cite{Ni-No,Pe} concerning relative entropy and cross-entropy:
\begin{equation} \label{e1.5}
\begin{array}{l}
H^{\times}(Y\|\Nor_{(\m,\Sigma)})= \frac{N}{2} \ln(2\pi)+\frac{1}{2}\|\m-\m_Y\|^2_{\Sigma}+\frac{1}{2}\tr(\Sigma^{-1}\Sigma_Y)+\frac{1}{2}\ln \det \Sigma,\\[1ex]
H^{\times}(Y\|\Nor_Y) =\frac{N}{2} \ln(2\pi e)+\frac{1}{2}\ln \det \Sigma_Y,
\end{array}
\end{equation}
we can easily deduce the formula for our match function for the Gaussian distribution $\Nor(\m,\Sigma)$ in $\R^N$:
\begin{equation} \label{e2}
\M(Y\|\Nor_{(\m,\Sigma)})= \frac{1}{2}\left(\|\m-\m_Y\|^2_{\Sigma}+\tr(\Sigma^{-1}\Sigma_Y)-\ln \det(\Sigma^{-1}\Sigma_Y)-N\right).
\end{equation}
We can now 
consider coordinate systems identified with
respective subclasses of Gaussian distributions. Consequently for a family
$\F$ of Gaussian distributions we put
$$
\M(Y\|\F):=\inf_{f \in \F}\M(Y\|f) = \inf_{f \in \F} H^{\times}(Y\| f)-H^{\times}(Y\|\Nor_Y).
$$
Clearly, to find $f$ which minimizes the above it is equivalent to find
$f$ which minimizes the cross-entropy
$
H^{\times}(Y\|\F) := \inf_{f \in \F} H^{\times}(Y\|f)
$.
Since cross-entropy and log-likelihood functions differ only by sign, we arrive
at the typical MLE \cite{Le-Ca, Bo,mle_a_3} problem. 

As $\F$ we consider the following typical subfamilies of all Gaussians $\G$:
\newline
\begin{tabular}{ l c p{9.5cm}} 
$\G_{\m}$ & -- & Gaussian densities with mean at $\m$;
\\[0.5ex] 
$\G_{s\I}$ & -- & Gaussian densities with covariance proportional to identity; \\[0.5ex] 
$\G_{\m,s\I}$ & -- & Gaussian densities with mean at $\m$ and covariance proportional to identity; \\[0.5ex] 
$\G_{\diag}$ & -- & Gaussian densities with diagonal covariance; \\[0.5ex]
$\G_{\m,\diag}$ & -- & Gaussian densities with mean at $\m$ and diagonal covariance. \\[0.5ex]
\end{tabular}
\newline
Observe for example that the use of family $\G_{\m,\diag}$ means
that we consider the coordinate systems which have origin at $\m$ 
and which have axes parallel to the original (cartesian) ones.
Moreover, if we have found the optimal Gaussian $\Nor_{(\m,\Sigma)}$
in our class of densities (identified with coordinate systems), we can transform the data $Y$ into those coordinates by the affine 
transformation $Y \ni y \to \Sigma^{-1/2}(y-\m) \in \R^N$.

\section{Rescaling}

Assume that we have fixed the origin of the coordinate system at the point $\m$
and that we want to find how we should (uniformly) rescale the coordinates to optimally fit the data. This means that we search for $s$ such that 
$s \to \M(Y\|\Nor_{(\m_Y,s\I)})$ attains minimum. Since 
\begin{equation} \label{e3}
H^{\times}(Y\|\Nor_{(\m_Y,s\I)})=\frac{1}{2}\big((\tr(\Sigma_Y)+\|\m-\m_Y\|^2)s^{-1}+N \ln s+N \ln(2\pi)\big),
\end{equation}
by the trivial calculations we obtain that the above function attains its minimum 
\begin{equation} \label{si}
\frac{N}{2}\big(\ln \frac{\tr (\Sigma_Y)+\|\m-\m_Y\|^2}{N}+\ln(2\pi e) \big)
\end{equation}
for 
\begin{equation} \label{e3.5}
s=\frac{\tr(\Sigma_Y)+\|\m-\m_Y\|^2}{N}.
\end{equation} 
Thus applying (\ref{e1.5}) to (\ref{eq:M_la}) we have arrived at the following theorem. 

\begin{theorem} \label{bas}
Let $Y$ be a data-set with invertible covariance matrix and $\m$ be fixed. Then $\M(Y\| \G_{\m,s \I})$ is minimized for $s=(\tr(\Sigma_Y)+\|\m-\m_Y\|^2)/N$, and equals 
$$
\M(Y\| \G_{\m,s \I}) = \frac{N}{2}\ln \frac{\tr(\Sigma_Y)+\|\m-\m_Y\|^2}{N}-\frac{1}{2}\ln \det \Sigma_Y.
$$
\end{theorem}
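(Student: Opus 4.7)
The plan is to exploit the fact that essentially all of the analytic work has already been done in equations \eqref{e3}--\eqref{e3.5}; what remains is to organize the minimization argument and reconcile it with the definition of $\M$ in \eqref{eq:M_la}.

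First I would set $\varphi(s) := H^{\times}(Y\|\Nor_{(\m,s\I)})$ and read off from \eqref{e3} that $\varphi(s) = \tfrac{1}{2}[(\tr(\Sigma_Y)+\|\m-\m_Y\|^2)s^{-1} + N\ln s + N\ln(2\pi)]$. Since $\Sigma_Y$ is positive definite and $\m$ is fixed, the coefficient $A := \tr(\Sigma_Y)+\|\m-\m_Y\|^2$ is strictly positive, so $\varphi$ is a smooth strictly convex function on $(0,\infty)$ whose derivative $\varphi'(s) = \tfrac{1}{2}(-A s^{-2} + N s^{-1})$ vanishes precisely at $s^\star = A/N$. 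A routine second-derivative check (or the convexity of $s \mapsto A/s + N\ln s$) shows that this is the global minimum, yielding the optimizer claimed in \eqref{e3.5} and the minimum value stated in \eqref{si}.

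Next I would pass from cross-entropy to $\M$. By definition \eqref{eq:M_la}, $\M(Y\|\G_{\m,s\I}) = \varphi(s) - H^{\times}(Y\|\Nor_Y)$, and the second line of \eqref{e1.5} gives $H^{\times}(Y\|\Nor_Y) = \tfrac{N}{2}\ln(2\pi e) + \tfrac{1}{2}\ln\det\Sigma_Y$. Subtracting this constant (in $s$) from \eqref{si} immediately produces
$$
\M(Y\|\G_{\m,s\I}) = \tfrac{N}{2}\ln\tfrac{\tr(\Sigma_Y)+\|\m-\m_Y\|^2}{N} + \tfrac{N}{2}\ln(2\pi e) - \tfrac{N}{2}\ln(2\pi e) - \tfrac{1}{2}\ln\det\Sigma_Y,
$$
which simplifies to the asserted expression.

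There is no real obstacle here; the only conceptual point worth flagging explicitly is why minimizing $\M(Y\|\cdot)$ over $\G_{\m,s\I}$ is equivalent to minimizing $H^{\times}(Y\|\cdot)$, but this is transparent because the subtracted term $H^{\times}(Y\|\Nor_Y)$ depends only on $Y$. The invertibility of $\Sigma_Y$ is needed only to guarantee that $\ln\det\Sigma_Y$ is finite and that $A > 0$ (hence $s^\star > 0$ lies in the admissible range). I would therefore present the proof as a two-line differentiation followed by substitution into \eqref{eq:M_la}.
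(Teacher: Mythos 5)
Your proposal is correct and follows exactly the route the paper takes: it derives the single-variable expression \eqref{e3}, minimizes it by the ``trivial calculation'' $s=(\tr(\Sigma_Y)+\|\m-\m_Y\|^2)/N$ to obtain \eqref{si}, and then subtracts the constant $H^{\times}(Y\|\Nor_Y)$ from \eqref{e1.5} via \eqref{eq:M_la}. You merely make explicit the convexity check and the observation that subtracting an $s$-independent term does not change the minimizer, both of which the paper leaves implicit.
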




If we allow in above Theorem the change of the origin, to minimize the value of $\M$ we have to clearly put $\m$ at $\m_Y$:

\begin{corollary}
Let $Y$ be a data-set with invertible covariance matrix. Then $\M(Y\| \G_{s \I})$ is minimized for $\m=\m_Y$, $s=\frac{1}{N}\tr(\Sigma_Y)$, and equals 
$$
\M(Y\| \G_{s \I} )=\frac{N}{2}\ln (\tr (\Sigma_Y)/N )-\frac{1}{2}\ln \det \Sigma_Y.
$$
\end{corollary}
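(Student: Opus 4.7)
The plan is to reduce the corollary directly to Theorem~\ref{bas}. Since the family $\G_{s\I}$ is the union over all $\m \in \R^N$ of the families $\G_{\m,s\I}$, we have
$$
\M(Y\|\G_{s\I}) = \inf_{\m \in \R^N}\, \M(Y\|\G_{\m,s\I}).
$$
Thus the optimization splits into an inner minimization over $s$ (with $\m$ fixed) and an outer minimization over $\m$.

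First, I would apply Theorem~\ref{bas} to eliminate the inner variable. For every fixed $\m$, the inner infimum over $s$ is already known to be attained at $s = (\tr(\Sigma_Y)+\|\m-\m_Y\|^2)/N$, with optimal value
$$
\frac{N}{2}\ln \frac{\tr(\Sigma_Y)+\|\m-\m_Y\|^2}{N}-\frac{1}{2}\ln \det \Sigma_Y.
$$
Second, I would minimize this expression over $\m \in \R^N$. Only the first term depends on $\m$, and since $\ln$ is strictly increasing on $(0,\infty)$, minimizing this term is equivalent to minimizing $\|\m-\m_Y\|^2$. The unique minimizer is $\m=\m_Y$, which makes the squared-norm contribution vanish.

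Substituting $\m=\m_Y$ back into the formula for the optimal $s$ gives $s = \tr(\Sigma_Y)/N$, and the optimal value of $\M$ becomes
$$
\frac{N}{2}\ln\frac{\tr(\Sigma_Y)}{N} - \frac{1}{2}\ln\det\Sigma_Y,
$$
which is exactly the claim. There is essentially no obstacle here: invertibility of $\Sigma_Y$ guarantees $\tr(\Sigma_Y) > 0$ so the logarithm is well-defined, and the argument is just a clean two-step optimization where the inner step is handled by the preceding theorem and the outer step is trivial by monotonicity of $\ln$.
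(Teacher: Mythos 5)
Your proof is correct and follows essentially the same route as the paper, which likewise obtains the corollary from Theorem~\ref{bas} by observing that once the inner optimization over $s$ is done, the resulting expression is minimized over $\m$ by taking $\m=\m_Y$. The paper states this more tersely, but the underlying two-step argument is identical to yours.
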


\begin{remark}
Assume that we want to move the origin to $\m$, and uniformly 
rescale. Then Theorem \ref{bas} implies that
$$
y \to (y-\m)/\sqrt{\frac{1}{N}(\tr(\Sigma_Y)+\|\m-\m_Y\|^2)}
$$
is the optimal rescaling of this type. Thus in the case of univariate data, the optimal
rescaling when we do not change the origin of the coordinate system 
is given by
$y \to y/\sqrt{\sigma_Y^2+\m_Y^2}=y/\sqrt{E(Y^2)}$.
\end{remark}

We consider the case when we allow to rescale each coordinate $Y_i$ of $Y=(Y_1,\ldots,Y_N)$ separately. Then the optimal change of 
coordinates of this form is given by optimal rescaling on each coordinate.

\begin{corollary}
Let $Y$ be a data-set with invertible covariance matrix and $\m$ be fixed. Then $\M(Y\| \G_{\m,\diag})$ is minimized for $s_i=(\Sigma_Y)_{ii}+|\m_i-(\m_Y)_i|^2$, and equals 
$$
\M(Y\| \G_{\m,\diag}) = \frac{1}{2}\sum_{i=1}^N \ln \big((\Sigma_Y)_{ii}+|\m_i-(\m_Y)_i|^2\big)-\frac{1}{2}\ln \det \Sigma_Y.
$$
\end{corollary}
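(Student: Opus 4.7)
The plan is to reduce the multivariate minimization to $N$ one–dimensional problems of exactly the form already solved in Theorem \ref{bas}. Concretely, I would start from the explicit formula (\ref{e2}) for $\M(Y\|\Nor_{(\m,\Sigma)})$ and substitute the diagonal ansatz $\Sigma=\diag(s_1,\ldots,s_N)$. Because all three nontrivial ingredients of (\ref{e2}) — the Mahalanobis norm $\|\m-\m_Y\|_\Sigma^2$, the trace $\tr(\Sigma^{-1}\Sigma_Y)$, and the log-determinant $\ln\det(\Sigma^{-1}\Sigma_Y)$ — separate additively over coordinates when $\Sigma$ is diagonal, the objective splits as a sum
$$
2\M(Y\|\Nor_{(\m,\Sigma)})=\sum_{i=1}^N\left(\frac{(\Sigma_Y)_{ii}+|\m_i-(\m_Y)_i|^2}{s_i}+\ln s_i\right)-\ln\det\Sigma_Y-N,
$$
in which the $i$-th summand depends only on $s_i$.

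The second step is to observe that the $i$-th summand is, up to constants, exactly the univariate problem of Theorem \ref{bas}: the one–dimensional version of \eqref{e3} with variance $(\Sigma_Y)_{ii}$ and center offset $|\m_i-(\m_Y)_i|^2$. I would therefore either invoke Theorem \ref{bas} with $N=1$ in each coordinate, or equivalently differentiate each summand in $s_i$ and solve $-a_i/s_i^2+1/s_i=0$, obtaining the stated optimum
$$
s_i=(\Sigma_Y)_{ii}+|\m_i-(\m_Y)_i|^2.
$$

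Finally, I would substitute this optimal $s_i$ back into the separated formula. Each summand contributes $1+\ln\bigl((\Sigma_Y)_{ii}+|\m_i-(\m_Y)_i|^2\bigr)$; the $N$ constants $1$ cancel the trailing $-N$, leaving
$$
2\M(Y\|\G_{\m,\diag})=\sum_{i=1}^N\ln\bigl((\Sigma_Y)_{ii}+|\m_i-(\m_Y)_i|^2\bigr)-\ln\det\Sigma_Y,
$$
which after division by $2$ is the claimed expression. There is no real obstacle here: the only thing to be careful about is the bookkeeping of the additive constants from (\ref{e1.5})–(\ref{e2}) when matching the one–dimensional optimizer to the multivariate form, and the implicit use of invertibility of $\Sigma_Y$ to guarantee $(\Sigma_Y)_{ii}>0$ so that each $s_i>0$ and $\ln\det\Sigma_Y$ is finite.
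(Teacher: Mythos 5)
Your proof is correct and follows essentially the same route as the paper: both reduce the problem to $N$ independent one--dimensional minimizations of $s\mapsto a_i/s+\ln s$ with $a_i=(\Sigma_Y)_{ii}+|\m_i-(\m_Y)_i|^2$ and then recombine. The only cosmetic difference is that you separate the objective algebraically via formula \eqref{e2}, whereas the paper separates at the level of cross-entropies of the marginals, $H^{\times}(Y\|\Nor_{(\m,\diag(s_1,\ldots,s_N))})=\sum_i H^{\times}(Y_i\|\Nor_{(\m_i,s_i)})$, before invoking the univariate result \eqref{si}.
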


\begin{proof}
For $\Nor_{(\m,\diag(s_1,\ldots,s_n))} \in \G_{\m,\diag}$ we have
$$
\begin{array}{l}
\M(Y\| \Nor_{(\m,\diag(s_1,\ldots,s_n))})=
H^{\times}(Y\|\Nor_{(\m,\diag(s_1,\ldots,s_n))})-H^{\times}(Y\|\Nor_Y) \\[1ex]
=\sum \limits_{i=1}^N H^{\times}(Y_i\|\Nor_{(\m_i,s_i)})-H^{\times}(Y\|\Nor_Y).
\end{array}
$$
By applying \eqref{si} to the univariate data $Y_i$ and \eqref{e1.5}
we obtain that the minimum of $\M(Y\|\Nor_{(\m,\diag(s_1,\ldots,s_n))})$
is realized for $s_i=(\Sigma_Y)_{ii}+|\m_i-(\m_Y)_i|^2)$ and equals
$$
\frac{1}{2}\sum_{i=1}^N\big(\ln ((\Sigma_Y)_{ii}+|\m_i-(\m_Y)_i|^2)+\ln(2\pi e) \big)-\frac{1}{2}\big(N \ln(2\pi e)+\ln \det \Sigma_Y\big)
$$ 
$$
=\frac{1}{2}\sum_{i=1}^N \ln ((\Sigma_Y)_{ii}+|\m_i-(\m_Y)_i|^2)-\frac{1}{2}\ln \det \Sigma_Y.
$$
\end{proof}

If we additionally allow the change of the origin, we should put $\m=\m_Y$ and the rescaling takes the form
$y \to (y-\m_Y)/\sqrt{\tr(\Sigma_Y)/N}$.



Now we focus our attention on the problem how to find the optimal coordinate system in the general case. To do so we first need to present
a simple consequence of the famous von Neuman trace inequality 
\cite{Gr,Mi}.

\medskip

\noindent{\bf Theorem [von Neumann trace inequality].}{\em \/ Let $E,F$ be complex $N \times N$ matrices. Then
\begin{equation} \label{neu}
|\tr(EF) | \leq \sum_{i=1}^N s_i(E)\cdot s_i(F),
\end{equation}
where $s_i(D)$ denote the ordered (decreasingly)
singular values of matrix $D$.}

\medskip

Let us recall that for the symmetric positive matrix its
eigenvalues coincide with singular values.
Given $\lambda_1,\ldots,\lambda_N \in \R$ by $S_{\lambda_1,\ldots,\lambda_N}$ we denote the set of all symmetric matrices with eigenvalues $\lambda_1,\ldots,\lambda_N$.

\begin{proposition}
Let $B$ be a symmetric nonnegative matrix with eigenvalues $\beta_1 \geq \ldots \geq\beta_N \geq 0$ .
Let $0 \leq \lambda_1 \leq \ldots \leq \lambda_N$ be fixed.
Then 
$$
\min_{A \in S_{\lambda_1,\ldots,\lambda_N}} \tr(AB)=\sum_i \lambda_i \beta_i.
$$ 
\end{proposition}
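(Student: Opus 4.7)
The plan is to use the von Neumann trace inequality stated just above, with the observation that both $A$ and $B$ are symmetric nonnegative (so their singular values coincide with their eigenvalues ordered decreasingly), and then to convert the resulting upper bound into the required lower bound by a linear shift.

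First, I would record the direct application of von Neumann. Since $A \in S_{\lambda_1,\ldots,\lambda_N}$ has eigenvalues $\lambda_1,\ldots,\lambda_N \geq 0$, its singular values in decreasing order are $\lambda_N \geq \lambda_{N-1} \geq \ldots \geq \lambda_1$, while those of $B$ are $\beta_1 \geq \ldots \geq \beta_N$. Moreover $\tr(AB) \geq 0$ (product of two positive semidefinite matrices has nonnegative trace), so $|\tr(AB)| = \tr(AB)$, and \eqref{neu} reads
$$
\tr(AB) \leq \sum_{i=1}^N \lambda_{N+1-i}\,\beta_i.
$$
This is the rearrangement with $\lambda$'s and $\beta$'s both decreasing, i.e., the \emph{maximum}, not what we want.

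Next, I would apply the same inequality to the shifted matrix $C := \lambda_N \I - A$. Its eigenvalues are $\lambda_N - \lambda_1 \geq \lambda_N - \lambda_2 \geq \ldots \geq \lambda_N - \lambda_N = 0$, all nonnegative, so $C$ is again a symmetric nonnegative matrix and von Neumann yields
$$
\tr(CB) \leq \sum_{i=1}^N (\lambda_N - \lambda_i)\,\beta_i .
$$
On the other hand $\tr(CB) = \lambda_N \sum_i \beta_i - \tr(AB)$. Substituting and cancelling the common $\lambda_N \sum_i \beta_i$ gives the desired lower bound $\tr(AB) \geq \sum_i \lambda_i \beta_i$.

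Finally I would verify attainability by exhibiting an $A \in S_{\lambda_1,\ldots,\lambda_N}$ for which equality holds. Diagonalize $B = U \diag(\beta_1,\ldots,\beta_N) U^T$ with the eigenvalues placed in decreasing order along the diagonal, and set
$$
A := U \diag(\lambda_1,\ldots,\lambda_N) U^T,
$$
with the $\lambda_i$'s in increasing order. Then $A$ is symmetric with eigenvalues $\lambda_1,\ldots,\lambda_N$, and a direct computation gives $\tr(AB) = \sum_i \lambda_i \beta_i$, which matches the lower bound and completes the proof.

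The main obstacle is conceptual rather than technical: von Neumann's inequality naturally produces an \emph{upper} bound corresponding to the ``like-with-like'' pairing of singular values, while we need the opposite rearrangement. The shift $A \mapsto \lambda_N \I - A$ is the key trick that converts a maximum statement into the minimum statement, and it works precisely because shifting by a multiple of the identity keeps the matrix in the symmetric nonnegative cone while reversing the ordering of its eigenvalues.
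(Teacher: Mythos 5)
Your proposal is correct and follows essentially the same route as the paper: the lower bound via von Neumann's inequality applied to the shifted matrix $\lambda_N \I - A$ (whose eigenvalues are nonnegative and reverse-ordered), and attainability by simultaneously diagonalizing $A$ with $B$ so that $\lambda_i$ pairs with $\beta_i$. The only difference is cosmetic — your preliminary remark that the direct application of von Neumann gives the maximum rather than the minimum is a helpful observation but not part of the paper's argument.
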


\begin{proof}
Let $e_i$ denote the orthogonal basis build from the eigenvectors of $B$, and let 
operator $\bar A$ be defined in this base by $\bar A(e_i)=\lambda_i e_i$. Then trivially 
$$
\min_{A \in S_{\lambda_1,\ldots,\lambda_N}} \tr(AB) \leq
\tr(\bar AB)=\sum_i \lambda_i \beta_i.
$$

To prove the inverse inequality we will use the
von Neumann trace inequality. Let $A \in S_{\lambda_1,\ldots,\lambda_N}$ be arbitrary. We apply the inequality \eqref{neu} for $E=\lambda_N \I-A$, $F=B$.
Since $E$ and $F$ are symmetric nonnegatively defined matrices, their eigenvalues
$\lambda_N-\lambda_i$ and $\beta_i$ coincide with singular values, and therefore by \eqref{neu}
\begin{equation} \label{nu2}
\tr((\lambda_N\I-A)B) \leq \sum_i(\lambda_N-\lambda_i)\beta_i=
\lambda_N \sum_i \beta_i -\sum_i \lambda_i \beta_i.
\end{equation}
Since 
$$
\tr((\lambda_N\I-A)B)=\lambda_N \sum_i \beta_i -\tr(AB),
$$
from inequality \eqref{nu2} we obtain that
$\tr(AB) \geq \sum_i \lambda_i \beta_i$.
\end{proof}

Now we proceed to the main result of the paper.

\begin{theorem}
Let $Y$ be a data-set and $\m \in \R^N$ be fixed. Then 
$$
\M(Y\|\G_{\m})=\frac{1}{2}\ln(1+\|\m-\m_Y\|^2_{\Sigma_Y}) 
$$
and is attained for the density $\Nor(\m,\Sigma) \in \G_{\m}$, where
$$
\Sigma=\Sigma_Y(\Sigma_Y-\frac{(\m-\m_Y)(\m-\m_Y)^T}{1+\|\m-\m_Y\|_{\Sigma_Y}^2} )^{-1}\Sigma_Y.
$$
\end{theorem}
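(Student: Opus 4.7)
The plan is to start from the explicit expression \eqref{e2} for $\M(Y\|\Nor_{(\m,\Sigma)})$ and reduce the problem to a clean minimization over symmetric positive definite $\Sigma$. Writing $d := \m - \m_Y$, the quadratic term combines with the trace as
\[
\|d\|_{\Sigma}^{2} + \tr(\Sigma^{-1}\Sigma_Y) \;=\; \tr\bigl(\Sigma^{-1}(\Sigma_Y + dd^{T})\bigr),
\]
so setting $M := \Sigma_Y + dd^{T}$ (symmetric, positive definite since $\Sigma_Y$ is), we are left to minimize
\[
g(\Sigma) \;=\; \tr(\Sigma^{-1}M) + \ln\det\Sigma
\]
over positive definite $\Sigma$, up to constants independent of $\Sigma$.

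Next I would use the Proposition together with the observation that $\ln\det\Sigma$ depends only on the eigenvalues of $\Sigma$. Let $\beta_1 \ge \ldots \ge \beta_N > 0$ be the eigenvalues of $M$ and $\mu_1 \le \ldots \le \mu_N$ the eigenvalues of $\Sigma$, so that $1/\mu_N \le \ldots \le 1/\mu_1$ are the eigenvalues of $\Sigma^{-1}$. Applying the Proposition with $A = \Sigma^{-1}$ and $B = M$, at fixed spectrum of $\Sigma$ the trace term satisfies
\[
\tr(\Sigma^{-1}M) \;\ge\; \sum_{i=1}^{N} \frac{\beta_{N-i+1}}{\mu_i},
\]
with equality precisely when $\Sigma$ and $M$ are simultaneously diagonalized so that the largest eigenvalue of $\Sigma$ is paired with the largest eigenvalue of $M$. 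Reducing to a scalar problem, I would then minimize $\sum_i \beta_{N-i+1}/\mu_i + \sum_i \ln \mu_i$ in each $\mu_i > 0$; an elementary one-variable calculation gives $\mu_i = \beta_{N-i+1}$. Hence the optimum is achieved exactly when $\Sigma = M = \Sigma_Y + dd^{T}$.

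It then remains to (i) compute the minimum value and (ii) reconcile the form of $\Sigma$ with the statement. For (i), I would use the matrix determinant lemma $\det(\Sigma_Y + dd^{T}) = \det(\Sigma_Y)\bigl(1 + \|d\|_{\Sigma_Y}^{2}\bigr)$ together with $\tr(\Sigma^{-1}M) = \tr(\I) = N$ at the optimum, which after substitution in \eqref{e2} collapses to $\frac{1}{2}\ln\bigl(1 + \|d\|_{\Sigma_Y}^{2}\bigr)$, as claimed. For (ii), I would apply the Sherman--Morrison identity to $\Sigma^{-1} = (\Sigma_Y + dd^{T})^{-1}$, namely
\[
\Sigma^{-1} \;=\; \Sigma_Y^{-1} - \frac{\Sigma_Y^{-1}dd^{T}\Sigma_Y^{-1}}{1 + \|d\|_{\Sigma_Y}^{2}},
\]
and verify that multiplying by $\Sigma_Y$ on both sides yields the displayed formula $\Sigma = \Sigma_Y\bigl(\Sigma_Y - \tfrac{dd^{T}}{1+\|d\|_{\Sigma_Y}^{2}}\bigr)^{-1}\Sigma_Y$ (again via Sherman--Morrison in the other direction).

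The main subtlety is Step 2: the Proposition controls $\tr(\Sigma^{-1}M)$ only at a fixed spectrum of $\Sigma$, so one has to be careful that the ``best alignment'' of eigenvectors is compatible with the subsequent free optimization over the eigenvalues. This works here because the second term $\ln\det\Sigma$ is a symmetric function of the eigenvalues alone and therefore decouples from the alignment; the joint minimum is attained at the unique $\Sigma$ that commutes with $M$ and has the same spectrum, i.e.\ $\Sigma = M$. Everything else is a bookkeeping calculation and an algebraic identity.
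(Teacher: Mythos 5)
Your proof is correct, but it takes a genuinely different route from the paper's. The paper first whitens by substituting $A=\Sigma_Y^{1/2}\Sigma^{-1}\Sigma_Y^{1/2}$ and $v=\Sigma_Y^{-1/2}(\m-\m_Y)$, keeps the quadratic term $v^TAv=\tr(A\,vv^T)$ separate, applies the Proposition to the rank-one matrix $vv^T$ (so only the smallest eigenvalue $\lambda_1$ of $A$ interacts with $v$), solves the resulting scalar problem in $\lambda_1,\ldots,\lambda_N$, and arrives at $A=\I-vv^T/(1+\|v\|^2)$, which it then inverts and un-whitens to get the stated $\Sigma$. You instead fold the quadratic term into the trace via $\|d\|^2_\Sigma+\tr(\Sigma^{-1}\Sigma_Y)=\tr\bigl(\Sigma^{-1}(\Sigma_Y+dd^T)\bigr)$ and recognize the standard Gaussian MLE functional $\tr(\Sigma^{-1}M)+\ln\det\Sigma$ with $M=\Sigma_Y+dd^T$; the Proposition is then applied to the full-rank pair $(\Sigma^{-1},M)$ and the eigenvalue optimization yields $\Sigma=M$ outright. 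Your two-stage argument (orientation at fixed spectrum, then the spectrum itself) is legitimate precisely because $\ln\det\Sigma$ depends only on the eigenvalues, as you note. What your route buys: the minimizer is identified transparently as the second-moment matrix of $Y$ about $\m$, the optimal value drops out of the matrix determinant lemma $\det(\Sigma_Y+dd^T)=(1+\|d\|^2_{\Sigma_Y})\det\Sigma_Y$, and Sherman--Morrison reconciles $\Sigma_Y+dd^T$ with the inverted form in the statement. What the paper's route buys is a slightly more elementary use of von Neumann (against a rank-one matrix) and an explicit exhibition of the optimal $A$ as a rank-one perturbation of the identity. One small caveat: your ``equality precisely when'' clause claims an equality characterization that the Proposition as stated does not supply, but this is harmless since you only need the lower bound together with a matrix attaining it.
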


\begin{proof}
Let us first observe that by applying substitution
$$
A=\Sigma_Y^{1/2}\Sigma^{-1}\Sigma_Y^{1/2},
v=\Sigma_Y^{-1/2}(\m-\m_Y),
$$
we obtain
\begin{equation} \label{e}
\begin{array}{l}
H^{\times}(Y\|\Nor_{(\m,\Sigma)})= \frac{1}{2}
\left(\tr(\Sigma^{-1}\Sigma_Y)+\|\m-\m_Y\|^2_{\Sigma}+\ln \det\Sigma+N\ln(2\pi)\right) \\[1ex]
=\frac{1}{2}\big(\tr(\Sigma^{-1}\Sigma_Y)+(\m-\m_Y)^T\Sigma^{-1}(\m-\m_Y)-\ln \det \Sigma^{-1}\Sigma_Y\\[0.5ex]
\phantom{=\frac{1}{2}\big(}+\ln \det \Sigma_Y+N\ln(2\pi)\big) \\[1ex]
=\frac{1}{2}\left(\tr(A)+v^TAv-\ln \det A+\ln \det \Sigma_Y+N\ln(2\pi)\right). 
\end{array}
\end{equation}
Observe that $A$ is then a symmetric positive matrix, and that given a symmetric positive matrix $A$ we can uniquely determine $\Sigma$
by the formula 
\begin{equation} \label{wyzna}
\Sigma=\Sigma_Y^{1/2}A^{-1}\Sigma_Y^{1/2}.
\end{equation}
Thus finding minimum of \eqref{e} reduces to finding a symmetric
positive matrix $A$ which minimize the value of
\begin{equation} \label{now}
\tr(A)+v^TAv-\ln \det A.
\end{equation}
Let us first consider $A \in S_{\lambda_1,\ldots,\lambda_N}$, 
where $0 < \lambda_1 \leq \ldots \leq \lambda_N$ are fixed. Our
aim is to minimize
$$
v^TAv=\tr(v^TAv)=\tr(A \cdot (vv^T)).
$$
We fix an orthonormal base such that $v/\|v\|$ is its first element,
and then by applying von Neumann trace formula we obtain that 
the above minimizes when $v$ is the eigenvector of $A$
corresponding to $\lambda_1$, and thus the minimum equals
$$
\lambda_1 \|v\|^2.
$$
Consequently we arrive at the minimization problem
$$
\lambda_1 (1+\|v\|^2)+\sum_{i>1}\lambda_i-\sum_i \ln \lambda_i. 
$$
Now one can easily verify that the minimum of the above
is realized for
$$
\lambda_1=1/(1+\|v\|^2), \lambda_i=1 \for i >1.
$$
and then \eqref{now} equals
$$
N+\ln(1+\|\m-\m_Y\|_{\Sigma_Y}^2),
$$ 
while the formula for $A$ minimizing it is given by
$$
A=\I-\frac{vv^T}{1+\|v\|^2}.
$$

Consequently then the value of \eqref{e} is
$$
\frac{1}{2}\left(\ln(1+\|\m-\m_Y\|^2_{\Sigma_Y})+\ln|\Sigma_Y|+N\ln(2\pi e)\right). 
$$
and is attained by \eqref{wyzna} for
$$
\Sigma=\Sigma_Y^{1/2}(I-\frac{\Sigma_Y^{-1/2}(\m-\m_Y)(\m-\m_Y)^T\Sigma_Y^{-1/2}}{1+\|\m-\m_Y\|_{\Sigma_Y}^2} )^{-1}\Sigma_Y^{1/2}
$$
$$
=\Sigma_Y(\Sigma_Y-\frac{(\m-\m_Y)(\m-\m_Y)^T}{1+\|\m-\m_Y\|_{\Sigma_Y}^2} )^{-1}\Sigma_Y.
$$
\end{proof}



By the above theorem we get the formula for $H^{\times}(Y\|\G_{\m})$.

\begin{corollary}\label{the:cs_4}
Let $\m \in \R^N$ be fixed. Then 
$$
H^{\times}(Y\|\G_{\m}) = \frac{1}{2}\left(\ln(1+\|\m-\m_Y\|^2_{\Sigma_Y})+\ln|\Sigma_Y|+N\ln(2\pi e)\right),
$$
and is attained for the density $\Nor(\m,\Sigma)$, where
$
\Sigma=\Sigma_Y(\Sigma_Y-\frac{(\m-\m_Y)(\m-\m_Y)^T}{1+\|\m-\m_Y\|_{\Sigma_Y}^2} )^{-1}\Sigma_Y
$.
\end{corollary}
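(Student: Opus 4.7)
The plan is to read the stated formula off directly from the preceding theorem. The density class $\G_{\m}$ appearing in both statements is the same, and by the definition \eqref{eq:M_la} of the match function, the maps $f \mapsto \M(Y\|f)$ and $f \mapsto H^{\times}(Y\|f)$ differ by the additive constant $H^{\times}(Y\|\Nor_Y)$, which does not depend on $f$. In particular, the two functionals share the same minimizer over $\G_{\m}$, so the optimal $\Sigma$ is exactly the matrix already identified in the previous theorem, and no fresh optimization is required.

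For the value, I would simply write
$$
H^{\times}(Y\|\G_{\m}) \;=\; \inf_{f \in \G_{\m}} H^{\times}(Y\|f) \;=\; \M(Y\|\G_{\m}) \;+\; H^{\times}(Y\|\Nor_Y),
$$
substitute $\M(Y\|\G_{\m}) = \frac{1}{2}\ln(1+\|\m-\m_Y\|^2_{\Sigma_Y})$ from the preceding theorem, and then plug in the closed form $H^{\times}(Y\|\Nor_Y) = \frac{N}{2}\ln(2\pi e) + \frac{1}{2}\ln \det \Sigma_Y$ from \eqref{e1.5}. Grouping the three resulting terms and rewriting $\ln\det\Sigma_Y$ as $\ln|\Sigma_Y|$ (the notation used elsewhere in the paper) produces exactly the right-hand side of the corollary.

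There is no genuine obstacle to overcome. All the analytic content, notably the substitution to $A = \Sigma_Y^{1/2}\Sigma^{-1}\Sigma_Y^{1/2}$ and $v = \Sigma_Y^{-1/2}(\m - \m_Y)$, the application of the von Neumann trace inequality to handle the $v^T A v$ term, and the one-variable optimization that pins down the eigenvalues of the optimal $A$, has already been discharged in the proof of the preceding theorem. The corollary is thus essentially a bookkeeping step that converts the relative quantity $\M(Y\|\G_{\m})$ into the absolute cross-entropy $H^{\times}(Y\|\G_{\m})$; the only care needed is to keep track of the constant $\frac{N}{2}\ln(2\pi e) + \frac{1}{2}\ln|\Sigma_Y|$ that is absorbed into $\M$.
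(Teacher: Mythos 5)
Your proposal is correct and matches the paper's own (implicit) argument: the paper derives the corollary directly from the preceding theorem by adding back the constant $H^{\times}(Y\|\Nor_Y)=\frac{N}{2}\ln(2\pi e)+\frac{1}{2}\ln\det\Sigma_Y$, and indeed the exact value $\frac{1}{2}\left(\ln(1+\|\m-\m_Y\|^2_{\Sigma_Y})+\ln|\Sigma_Y|+N\ln(2\pi e)\right)$ already appears as the minimized cross-entropy at the end of the theorem's proof. Nothing further is needed.
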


%

\begin{table}[!h]\centering
\def\arraystretch{1.7}
\begin{tabular}{||l|l||} \hline \hline
$\F$  & $M^{\times}(Y\|\F)$   \\[0.5ex] 
\hline \hline

$\G$ & $0$
\\ \hline

$\G_{\m}$ & $\frac{1}{2}\left(\ln(1+\|\m-\m_Y\|^2_{\Sigma_Y})+\ln|\Sigma_Y|+N\ln(2\pi e)\right)$
\\ \hline

$\G_{sI}$ & $\frac{N}{2}\ln \left(\tr \frac{\Sigma_Y}{N} \right)-\frac{1}{2}\ln \det \Sigma_Y$
\\ \hline

$\G_{\m,sI}$  & $ \frac{N}{2}\ln \left( \frac{\tr(\Sigma_Y)+\|\m-\m_Y\|^2}{N} \right) -  \frac{1}{2}\ln \left( \det \Sigma_Y \right) $
\\ \hline

$\G_{\diag}$  & $ \frac{N}{2}\ln \left( \frac{\tr(\Sigma_Y) }{N} \right) -\frac{1}{2}\ln \det \Sigma_Y$
\\ \hline

$\G_{\m,\diag}$ & $\frac{1}{2}\sum \limits_{i=1}^N \ln \big((\Sigma_Y)_{ii}+|\m_i-(\m_Y)_i|^2\big)-\frac{1}{2}\ln \left( \det \Sigma_Y \right) $
\\ \hline
\hline
\end{tabular}
\caption{Table of $M^{\times}(Y\|\F)$ with respect to Gaussian subfamilies.}
\label{tab1:cec}
\end{table}

At the end of this article we illustrate our results on the data generated from the classical Lena picture. 

\begin{example}\label{ex:2}
Let us consider the classical Lena picture from The USC-SIPI Image Database (\url{http://sipi.usc.edu/database/}).
First, we interpret photo as a dataset, as in the JPG compression. We do this by dividing it into 8 by 8 pixels, where each pixel is described (in RGB) by using 3 parameters. Consequently each of the pieces is represented as a vector from $\mathbb{R}^{192}$. By this operation we obtain dataset $Y$ from $ \mathbb{R}^{192} $. 
In Table \ref{tab:jpg} we present values of $\M(Y\|\G_{\m,\F})$ for various $\F$. Moreover we consider the case when the origin is at $\m = \left[\frac{1}{2},\ldots,\frac{1}{2} \right] $, similarly like in JPG format.

\begin{table}[ht]
\def\arraystretch{1.5}
  \centering
\begin{tabular}{ || l |  c | c | c || }
\hline\hline
 &  $\M(Y\|\G_{\m})$ & $\M(Y\|\G_{\m,\diag})$  & $\M(Y\|\G_{\m,sI})$ \\
\hline
\hline
$\m = \m_{Y}$  & 0  & 672.9  & 679.168 \\
$\m = \left[\frac{1}{2},\ldots,\frac{1}{2} \right] $ & 0.520298 & 717.217  & 704.124  \\
$\m = \left[0,\ldots,0 \right] $ & 2.73125 & 883.227  & 1031  \\
\hline
\hline
\end{tabular}
  \caption{Values of $\M(Y\|\F)$ for different $\F$ and data from Example \ref{ex:2}.}
  \label{tab:jpg}
\end{table}
As a conclusion we see that the change of the origin from the mean
to the point $[\frac{1}{2},\ldots,\frac{1}{2}]$ does not ``cost'' us
much, compared to the case when we restrict to the class of 
$\G_{\diag}$. Moreover we see that the fixed center in $ \left[\frac{1}{2},\ldots,\frac{1}{2} \right]$ is better then $ \left[0,\ldots,0 \right] $ which shows that JPG approach is reasonable.
\end{example}

\section{Conclusion}

In this paper we present the method of determining the optimal coordinate systems for the dataset $Y \in \R^N$ which meets an commonly encountered conditions. We interpret the optimal Gaussian density (in the sense of cross--entropy) as an optimal transformation of a data. Consequently we obtain the measure of the optimality of the given coordinates system (represented by Gaussian family $\F$)
$$
\M(Y\|\F) = \inf_{f \in \F} H^{\times}(Y\| f)-H^{\times}(Y\|\Nor_Y).
$$

We obtain estimations in various subclasses of normal densities. The results we present in Tab. \ref{tab1:cec}

\bibliographystyle{spmpsci}      


\end{document}